\documentclass[12pt,reqno]{amsart}
\usepackage[left=3.2cm, top=3.4cm, bottom=3.4cm, right=3.2cm]{geometry}
\usepackage{mathrsfs}
\usepackage{amsmath,amsxtra,amsfonts,amssymb, amsthm, amscd, epsfig}
\usepackage[dvipsnames,usenames]{color}
\usepackage{amssymb,amsmath,amsfonts,epsfig}
\usepackage{graphicx}

\numberwithin{equation}{section}

\theoremstyle{plain}
\newtheorem{thm}{Theorem}[section]

\newtheorem{prop}[thm]{Proposition}
\newtheorem{cor}[thm]{Corollary}

\newtheorem{lem}[thm]{Lemma}
\newtheorem{de}[thm]{Definition}
\newtheorem{rem}[thm]{Remark}
\newtheorem{ex}[thm]{Example}
\newcommand{\eqa}{\begin{eqnarray}}
\newcommand{\eeqa}{\end{eqnarray}}
\newcommand{\beq}{\begin{equation}}
\newcommand{\eeq}{\end{equation}}
\newcommand{\nn}{\nonumber}
\newcommand{\p}{\partial}

\newcommand{\be}{{\bf 1_\mathfrak{F}}}
\def \la {\langle}
\def \ra{\rangle}
\def \var{\varepsilon}
\def \trm{\mathrm{tr}_{2,k}^\var \,}

\def \tr{\mathrm{tr}_\mathfrak{F} \,}

\def \dsum{\displaystyle\sum}

\begin{document}
\title[]
{The Frobenius-Virasoro algebra and Euler equations }
\author[]{Dafeng Zuo}

\address[]{School of Mathematical Science,University of Science and
Technology of China, Hefei 230026, P.R.China}

\address[]{Wu Wen-Tsun Key Laboratory of Mathematics,
USTC, Chinese Academy of Sciences}

\address[]{School of Mathematics and Statistics, University of Glasgow}

\email{dfzuo@ustc.edu.cn}


\date{\today}

\begin{abstract}
We introduce an $\mathfrak{F}$-valued generalization of the Virasoro algebra,
called the Frobenius-Virasoro algebra $\mathfrak{vir_F}$, where  $\mathfrak{F}$ is
a Frobenius algebra over $\mathbb{R}$. We also study Euler equations on the regular dual
of $\mathfrak{vir_F}$, including the $\mathfrak{F}$-$\mathrm{KdV}$ equation and the
$\mathfrak{F}$-$\mathrm{CH}$ equation and the $\mathfrak{F}$-$\mathrm{HS}$ equation,
and discuss their Hamiltonian properties.
 \end{abstract}
 \keywords{Frobenius-Virasoro algebra, Euler  equations }

\maketitle 
\tableofcontents
\section{Introduction}

Let $\mathfrak{G}$ be a Lie algebra and $\mathfrak{G}^*$
(the regular part of) its dual, and let $\la~,~\ra^*$ denote a natural pairing between
$\mathfrak{G}$ and $\mathfrak{G}^*$.

\begin{de} The Euler equation on $\mathfrak{G}^*$ is defined by
the following system (e.g., \cite{AK1998,KM2003}):
\beq \frac{dm}{dt}=-ad^*_{\mathcal{A}^{-1}m}m,\label{eq0.1}\eeq
as an evolution of a point $m \in \mathfrak{G}^*$, where
$\mathcal{A}:\mathfrak{G}\rightarrow \mathfrak{G}^*$ is an invertible self-adjoint operator,
called the inertia operator.\end{de}
It is well known that the KdV equation
\beq u_t+3uu_x+c u_{xxx}=0\nn\eeq
and the Camassa-Holm (CH in brief) equation
\beq m_t+2mu_x+m_xu+c u_{xxx}=0,\quad m=u-u_{xx} \nn\eeq
and the Hunter-Saxton (HS in brief)  equation
\beq m_t+2mu_x+m_xu+c u_{xxx}=0,\quad m=-u_{xx} \nn \eeq
could be regarded as Euler equations on the dual of Virasoro
algebra $\mathfrak{vir}$ with different inner products (\cite{KM2003,Kolev2007,M1998,OK1987}).
Let us remark that V.I.Arnold in \cite{Arn1966} suggested a general framework
for the Euler equation on an arbitrary Lie group $G$, which
is useful to characterize a variety of conservative dynamical systems,
please see $e.g.$, \cite{ AK1998, Guha2006,KM2003,KW2009,Kolev2007,M1998,OK1987,IS2013,
Zuo-2010-1,Zuo-2013}
and references therein. If the  corresponding Lie algebra is $\mathfrak{G}$, then the Euler
equation \eqref{eq0.1} on $\mathfrak{G}^*$ could describe a geodesic flow w.r.t a suitable
one-side invariant Riemannian metric on  Lie group $G$.

In our recent works \cite{SZ2014,Zuo2013}, we studied the relation between Frobenius manifolds
and Frobenius algebra-valued integrable systems.
\begin{de} A Frobenius algebra $(\mathfrak{F},g_\mathfrak{F},{\bf 1_\mathfrak{F}},\circ)$ over $\mathbb{R}$ is a
 free $\mathbb{R}$-module $\mathfrak{F}$
of finite rank $l$, equipped with a commutative and associative
multiplication $\circ$  and a unit ${\bf 1_\mathfrak{F}}$, and a $\mathbb{R}$-bilinear symmetric
nondegenerate form $g_\mathfrak{F}:\mathfrak{F}\times \mathfrak{F}\rightarrow \mathbb{R}$
satisfying $g_\mathfrak{F}(a\circ b,c)=g_\mathfrak{F}(a,b\circ c)$.\end{de}
Having this nondegenerate form $g_\mathfrak{F}$ is equivalent to having a linear form
$\tr:\mathfrak{F}\rightarrow \mathbb{R}$ whose kernel contains no
trivial ideas. This linear form is often called a trace map.
Indeed, given $g_\mathfrak{F}$, we put $\tr(a):=g_\mathfrak{F}(a,{\bf 1_\mathfrak{F}})$. Conversely, given
$\tr$, we could define  $g_\mathfrak{F}(a,b):=\tr(a\circ b)$.

Observe that an $\mathfrak{F}$-valued KdV ($\mathfrak{F}$-KdV) equation
\beq u_t+3u\circ u_x+\zeta \circ u_{xxx}=0, \quad \zeta\in\mathfrak{F} \label{eq0.5}\eeq
has been derived in \cite{SZ2014,Zuo2013}, where $u$ is a smooth $\mathfrak{F}$-valued function.
A natural question is to ask:

{\it``Could the $\mathfrak{F}$-KdV equation \eqref{eq0.5} be regarded as
a Euler equation on the regular dual of an infinite-dimensional Lie algebra $\mathfrak{G}$ ?" }

Our work is inspired by this question. This paper
is to give an affirmative answer and organized as follows.
Firstly, we introduce an infinite dimensional Lie algebra,
called {\it the Frobenius-Virasoro algebra} $\mathfrak{vir_F}$,
which is an $\mathfrak{F}$-valued generalization of the Virasoro algebra. Afterwards,
we compute Euler equations on the regular dual $\mathfrak{vir_F}^*$ of $\mathfrak{vir_F}$
for certain products, including the $\mathfrak{F}$-KdV equation, the $\mathfrak{F}$-CH equation
and the $\mathfrak{F}$-HS equation. Moreover we show that {\it all resulted Euler equations
for the inner product $P_{\alpha,\beta}$ are local bihamiltonian}.
Let us remark that in order to define the Euler equation on $\mathfrak{vir_F}^*$, it is enough to
require a commutative and associative algebra  $(\mathfrak{F},{\bf 1_\mathfrak{F}},\circ)$.
In other words, we don't require the existence of trace map $\tr$. An interesting fact
(also noted in \cite{SZ2014}) is that
{\it if on $(\mathfrak{F},{\bf 1_\mathfrak{F}},\circ)$, there are many different trace maps,
then the corresponding $\mathfrak{F}$-valued  Euler equation has many different (bi)hamiltonian structures.}
Finally we discuss some examples to illustrate our construction.

\section{Euler equations on $\mathfrak{vir_F}^*$}
Throughout this paper, we assume that the Frobenius algebra $\mathfrak{F}
:=(\mathfrak{F},\tr,{\bf 1_\mathfrak{F}},\circ)$ has the basis
$e_1=\be,\, e_2,\cdots, e_l$.

\subsection{The Frobenius-Virasoro algebra $\mathfrak{vir_F}$} We begin with some definitions.
\begin{de}We define an infinite-dimensional Lie algebra $(\mathfrak{X},[~,~])$ over $\mathbb{R}$ by
\beq \mathfrak{X}:=\left\{u(x)\frac{d}{dx}| u\in \mathrm{C}^\infty(\mathbb{S}^1,\mathfrak{F})\right\},
\quad [u\p,v\p]:=(u{\circ} v_x-u_x{\circ} v)\p, \quad \p=\frac{d}{dx}.\eeq
 \end{de}

We remark that $\mathfrak{X}$ is different from
the loop algebra $L\mathfrak{F}$ of $\mathfrak{F}$.
As vector spaces, they are isomorphic under the map
$$\Psi: L\mathfrak{F}\rightarrow \mathfrak{X},\quad \Psi(u)=u\p.$$
 But as Lie algebras, $\Psi$ is not a Lie algebra homomorphism.

\begin{lem}The map $\omega_\mathfrak{F}: \mathfrak{X}\times \mathfrak{X} \rightarrow \mathfrak{F}$ defined by
\beq \omega_\mathfrak{F}(u\p, v\p)= \int_{\mathbb{S}^1} u{\circ} v_{xxx}dx \eeq
is a nontrivial 2-cocycle on $\mathfrak{X}$, called the $\mathfrak{F}$-valued Gelfand-Fuchs cocycle.
 \end{lem}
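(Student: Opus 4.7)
\medskip

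\noindent\textbf{Proof plan.} The verification has three parts: bilinearity and antisymmetry of $\omega_\mathfrak{F}$, the cocycle identity, and nontriviality. Bilinearity is immediate from the definition. For antisymmetry, I would apply integration by parts on $\mathbb{S}^1$ three times to the integral $\int_{\mathbb{S}^1} u \circ v_{xxx}\,dx$; because $\circ$ is commutative and each boundary term vanishes on $\mathbb{S}^1$, this yields $-\int_{\mathbb{S}^1} u_{xxx}\circ v\,dx = -\int_{\mathbb{S}^1} v\circ u_{xxx}\,dx$, giving $\omega_\mathfrak{F}(u\partial,v\partial)=-\omega_\mathfrak{F}(v\partial,u\partial)$.

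For the cocycle identity, I would expand
\[
\omega_\mathfrak{F}([u\partial,v\partial],w\partial)
= \int_{\mathbb{S}^1} (u\circ v_x - u_x\circ v)\circ w_{xxx}\,dx,
\]
and form the cyclic sum over $(u,v,w)$. Using associativity and commutativity of $\circ$, every term has the shape $\int a\circ b\circ c\,dx$ with total derivative order three distributed among $a,b,c$. I would integrate by parts to move all derivatives onto a single factor (say $w$), collect terms with the same derivative pattern, and check that they cancel in cyclic sum. This is exactly the classical Gelfand--Fuchs computation, lifted to $\mathfrak{F}$-valued integrands; nothing beyond the commutative-associative structure of $\mathfrak{F}$ is used, so the calculation goes through verbatim. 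This bookkeeping is the main computational step, but it is routine.

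For nontriviality, suppose $\omega_\mathfrak{F}=\delta\ell$ for some $\mathbb{R}$-linear $\ell:\mathfrak{X}\to\mathfrak{F}$, i.e.\ $\omega_\mathfrak{F}(X,Y)=\ell([X,Y])$. I would test this on the constant-coefficient Fourier modes $u={\bf 1_\mathfrak{F}}\cos(nx)$ and $v={\bf 1_\mathfrak{F}}\sin(nx)$ (or, after complexification, on $e^{\pm inx}{\bf 1_\mathfrak{F}}$). A direct computation gives $\omega_\mathfrak{F}(u\partial,v\partial)$ proportional to $n^3 {\bf 1_\mathfrak{F}}$, while $[u\partial,v\partial]$ is a multiple of $n \cdot {\bf 1_\mathfrak{F}}\,\partial$ plus terms whose $\ell$-image must be independent of $n$. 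Hence $\ell({\bf 1_\mathfrak{F}}\,\partial)$ would have to depend on $n$, a contradiction. Equivalently, restricting $\omega_\mathfrak{F}$ to the ${\bf 1_\mathfrak{F}}$-component reduces it to the scalar Gelfand--Fuchs cocycle on the Witt algebra, whose nontriviality is classical; composing with the trace map (or any linear functional sending ${\bf 1_\mathfrak{F}}$ to a nonzero scalar) then yields a nontrivial scalar 2-cocycle, forcing $\omega_\mathfrak{F}$ itself to be nontrivial. I expect the cocycle identity verification to be the most tedious step, but no genuine obstacle arises.
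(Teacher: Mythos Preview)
Your proposal is correct and follows essentially the same approach as the paper: antisymmetry via integration by parts and the cocycle identity via the cyclic sum, both relying only on commutativity and associativity of $\circ$. The paper's proof is terser and in fact omits the nontriviality argument entirely; your reduction to the scalar Gelfand--Fuchs cocycle on the ${\bf 1_\mathfrak{F}}$-copy of $\mathrm{Vect}(\mathbb{S}^1)$ is the natural way to fill that gap.
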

 \begin{proof}Observe that the Frobenius algebra $\mathfrak{F}$ is commutative and associative,
then we have
$$\mbox{(i). $\omega_\mathfrak{F}(u\p, v\p)=-\omega_\mathfrak{F}(v\p, u\p)$;\quad
(ii). $\omega_\mathfrak{F}(u\p, [v\p,w\p])+c.p.=0$},$$
which follow the desired result. \end{proof}

\begin{de} \label{de2.4} The central extension of $\mathfrak{X}$ is
called the Frobenius-Virasoro algebra, denoted by $\mathfrak{vir_F}$ with the Lie bracket
\beq [(u\p,a),(v\p,b)]:=\left([u\p,v\p], \,
\omega_\mathfrak{F}(u\p,v\p)\right). \eeq
 \end{de}

 When one chooses the Frobenius algebra $\mathfrak{F}$ to be
$\mathbb{R}$, $\mathfrak{vir_F}$ is exactly the Virasoro algebra and
$\mathfrak{X}=\mathrm{Vect}(\mathbb{S}^1)$. It is well known
(e.g.\cite{KW2009,PS1986}) that the second continuous cohomology group
$\mathrm{H}^2(\mathrm{Vect}(\mathbb{S}^1),\mathbb{R})\cong \mathbb{R}$
is generated by the Gefland-Fuchs cocycle.
Generally when $\rm{dim}\mathfrak{F}>1$,  $\mathrm{H}^2(\mathfrak{X},\mathfrak{F})$
is not generated by the $\mathfrak{F}$-valued Gelfand-Fuchs cocycle $\omega_{\mathfrak{F}}$.
An interesting problem is to compute $\mathrm{H}^2(\mathfrak{X},\mathfrak{F})$.

\subsection{Euler equations on $\mathfrak{vir_F}^*$}
We denote the regular dual of the Frobenius-Virasoro algebra
 $\mathfrak{vir_F}$ by
\beq \mathfrak{vir_F}^*=\left\{(m(x,t)(dx)^2, \zeta(t))|\mbox {$m(x,t)$ and $\zeta(t)$ are smooth $\mathfrak{F}$-valued functions } \right\}\nn\eeq
with respect to the paring
 \beq \la (m dx^2, \zeta), (u\p, a) \ra^*=\tr\int_{\mathrm{S}^1}m{\circ} u dx+\tr(\zeta{\circ} a).\eeq
Write  $\hat{m}=(m dx^2, \zeta) \in \mathfrak{vir_F}^*$ and $\hat{u}=(u\p, a),\, \hat{v}=(v\p, b)
\in \mathfrak{vir_F}$.  By the definition,
\beq  \la ad^*_{\hat{u}}(\hat{m}),\hat{v} \ra^*=-\la \hat{m},[\hat{u},\hat{v}]\ra^*
 = \tr\int_{\mathbb{S}^1}(2m{\circ}u_x+m_x{\circ}u+\zeta {\circ}u_{xxx}){\circ}v dx \nn
 \eeq
 which yields that the coadjoint action of $\mathfrak{vir_F}$ on $\mathfrak{vir_F}^*$ is given by
 \beq ad^*_{\hat{u}} \hat{m}=\left((2m{\circ}u_x+m_x{\circ}u+\zeta{\circ}u_{xxx})(dx)^2,\,0\right).\eeq
On $\mathfrak{vir_F}$, we introduce a two-parameter family of inner product $P_{\alpha,\beta}$,
$\alpha,\,\beta\in\mathfrak{F}$ defined by
\beq \la\hat{u},\hat{v}\ra= \tr\int_{\mathrm{S}^1}(\alpha{\circ}u{\circ}v+\beta{\circ}u_x{\circ}v_x) dx+\tr(a{\circ}~b).
\eeq
Observe that for the $P_{\alpha,\beta}$, the inertia operator
 $\mathcal{A}:\mathfrak{vir_F} \longrightarrow \mathfrak{vir_F}^*$
 is defined by
$\la \hat{u},\hat{v} \ra=\la \mathcal{A}(\hat{u}),\hat{v} \ra^*$. In other words,
$\mathcal{A}(\hat{u})=(\Lambda(u),a)$,
where $\Lambda=\alpha-\beta\p^2$ is an $\mathfrak{F}$-valued
differential operator. So we have

\begin{prop}The Euler equation \eqref{eq0.1} on $\mathfrak{vir_F}^*$ for
$P_{\alpha,\beta}$ reads
\beq m_t+2m{\circ}u_x+m_x{\circ}u+\zeta{\circ}u_{xxx}=0,\quad \zeta_t=0, \label{eq3.5} \eeq
where $m=\Lambda(u)=\alpha {\circ}u-\beta {\circ}u_{xx}$.
 \end{prop}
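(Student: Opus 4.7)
The plan is to assemble three ingredients that are already essentially in place in the excerpt: the explicit formula for the coadjoint action, the identification of the inertia operator $\mathcal{A}$ for the inner product $P_{\alpha,\beta}$, and the defining relation \eqref{eq0.1} of the Euler equation. The only real piece of verification needed is the formula $\mathcal{A}(\hat{u})=(\Lambda(u),a)$; once this is in hand, the result falls out by substitution.

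First, I would verify the claim about $\mathcal{A}$. Starting from
$\la\hat{u},\hat{v}\ra=\tr\int_{\mathbb{S}^1}(\alpha{\circ}u{\circ}v+\beta{\circ}u_x{\circ}v_x)\,dx+\tr(a{\circ}b)$,
I integrate the second term by parts on $\mathbb{S}^1$, using the commutativity and associativity of $\circ$ to move $\beta$ past the derivative-free factor. This turns $\beta\circ u_x\circ v_x$ into $-\beta\circ u_{xx}\circ v$ up to a boundary term that vanishes on $\mathbb{S}^1$, and yields
$\la\hat{u},\hat{v}\ra=\tr\int_{\mathbb{S}^1}(\alpha\circ u-\beta\circ u_{xx})\circ v\,dx+\tr(a\circ b)$.
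Comparing with the pairing $\la\,\cdot\,,\,\cdot\,\ra^*$ of the excerpt identifies $\mathcal{A}(\hat{u})=(\Lambda(u),a)$ with $\Lambda=\alpha-\beta\partial^2$, exactly as stated.

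Next, assuming $\Lambda$ is invertible on the space of smooth $\mathfrak{F}$-valued functions (a hypothesis implicit in calling $\mathcal{A}$ an inertia operator), the inverse inertia operator reads $\mathcal{A}^{-1}(\hat{m})=(\Lambda^{-1}(m)\p,\zeta)$. Writing $u:=\Lambda^{-1}(m)$ so that $m=\Lambda(u)=\alpha\circ u-\beta\circ u_{xx}$, I then plug $\hat{u}=\mathcal{A}^{-1}(\hat{m})$ into the coadjoint formula already derived in the excerpt, namely
$ad^*_{\hat{u}}\hat{m}=\bigl((2m\circ u_x+m_x\circ u+\zeta\circ u_{xxx})(dx)^2,\,0\bigr)$.
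Substituting into the Euler equation $\frac{d\hat{m}}{dt}=-ad^*_{\mathcal{A}^{-1}\hat{m}}\hat{m}$ and separating the first and second components gives the coupled system \eqref{eq3.5} together with $\zeta_t=0$.

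There is no real obstacle here; the exposition is entirely a matter of bookkeeping. The most delicate point is the integration by parts for the $\beta$ term, which requires the $\circ$-symmetry of the integrand together with the invariance $g_\mathfrak{F}(a\circ b,c)=g_\mathfrak{F}(a,b\circ c)$ implicit in $\tr(a\circ b\circ c)$ being totally symmetric; once this is noted, identifying $\mathcal{A}$ and then substituting into the coadjoint formula finishes the proof.
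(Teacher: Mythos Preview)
Your proposal is correct and follows essentially the same approach as the paper: identify the inertia operator $\mathcal{A}(\hat{u})=(\Lambda(u),a)$ via integration by parts in the definition of $P_{\alpha,\beta}$, then substitute $\hat{u}=\mathcal{A}^{-1}\hat{m}$ into the already-computed coadjoint formula $ad^*_{\hat{u}}\hat{m}=((2m\circ u_x+m_x\circ u+\zeta\circ u_{xxx})(dx)^2,0)$ and read off the two components of $\frac{d\hat{m}}{dt}=-ad^*_{\mathcal{A}^{-1}\hat{m}}\hat{m}$. The paper in fact presents the proposition as an immediate consequence (``So we have'') of these two preceding computations, without a separate proof.
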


When $\zeta=0$,  the system \eqref{eq3.5} could be regarded as the Euler equation
on $\mathfrak{X}^*$.  When the Frobenius algebra $\mathfrak{F}$ is one-dimensional, i.e.,
$\mathbb{R}$, the system \eqref{eq3.5} is the Euler equation on $\mathfrak{vir}^*$ (e.g.,
\cite{KM2003}). Generally, when $\alpha\ne 0$, $\beta=0$ and $0\ne \zeta \in \mathfrak{F}$,
the system \eqref{eq3.5} reads \mbox{the $\mathfrak{F}$-KdV equation}
\beq \alpha{\circ}u_t+3\alpha{\circ}u{\circ}u_x+\zeta{\circ}u_{xxx}=0.\label{F1} \eeq
When $\alpha\ne 0$, $\beta\ne 0$ and  $\zeta \in \mathfrak{F}$, the system \eqref{eq3.5} becomes
\mbox{the $\mathfrak{F}$-CH equation}
\beq \quad m_t+2m{\circ}u_x+m_x{\circ}u+\zeta{\circ}u_{xxx}=0,\quad m=\alpha {\circ}u-\beta{\circ}u_{xx}
.\label{F2}\eeq
When $\alpha=0$, $\beta\ne 0$ and $ \zeta \in \mathfrak{F}$, the system \eqref{eq3.5}
reduces to \mbox{the $\mathfrak{F}$-HS equation}
 \beq  \beta{\circ}(u_{xxt}+2u_{xx}{\circ}u_x+u_{xxx}{\circ}u)-\zeta {\circ}u_{xxx}=0,
 \quad m=-\beta{\circ}u_{xx}.\label{F3}\eeq

\begin{ex}\label{ex2.2} Let $\mathcal{Z}_2^{\var}$ be a $2$-dimensional commutative and associative algebra
over $\mathbb{R}$ with the basis $e_1=\be, e_2$ satisfying
\beq e_1{\circ} e_1=e_1,\quad e_1{\circ} e_2=e_2, \quad e_2{\circ} e_2=\var e_1, \quad \var \in \mathbb{R}. \nn \eeq
Thus for any $A\in \mathcal{Z}^\var_2$, we could write $A=a_1e_1+a_2e_2$, $a_k \in \mathbb{R}$
and define two ``basic" trace-type maps as follows
\beq \trm(A)=a_k+a_2(1-\delta_{k,2})\delta_{\var,0} \quad k=1,2.\label{ZZ2.3} \eeq
So $(\mathcal{Z}_2^{\var}, \trm, \be,{\circ})$ for $k=1,2$ are the Frobenius algebras (\cite{SZ2014}).
The $\mathcal{Z}_2^{\var}$-valued Euler equation with $\zeta\in \mathcal{Z}_2^{\var}$ is given by
 \beq m_t+2m\circ u_x+m_x\circ u+\zeta \circ u_{xxx}=0,\quad m=\alpha\circ u-\beta\circ u_{xx}. \label{Z3.15}\eeq

\noindent{\bf (i).}\, When $\alpha=\zeta=\be$ and $\beta=0$, the system
\eqref{Z3.15} reduces to the $\mathcal{Z}_2^{\var}$-KdV equation (\cite{SZ2014,Zuo2013})
\beq  u_t+3u\circ u_x+u_{xxx}=0,\quad u=ve_1+we_2\nn  \eeq
equivalently in componentwise forms,
\beq v_t+3vv_x+v_{xxx}+3 \var w w_x=0,\quad w_t+3(vw)_x+w_{xxx}=0.\label{zeq3.6}\eeq
When $\var=0$, the system \eqref{zeq3.6} is the coupled KdV equation in \cite{CO2006,CO2006-2,FR1989}.
When $\var=-1$, the system \eqref{zeq3.6} is a complexification of the KdV equation.\\

\noindent{\bf (ii).}\, When $\alpha=\beta=\be$ and $\zeta=0$, the system
\eqref{Z3.15} reduces to the $\mathcal{Z}_2^{\var}$-CH equation
\beq  m_t+2m\circ u_x+m_x\circ u=0,\quad m=u-u_{xx}, \quad u=ve_1+we_2,\nn  \eeq
equivalently in componentwise forms,
\beq\begin{array}{ll}  p_t+2pv_x+p_xv+\var(2qw_x+q_xw)=0, &   p=v-v_{xx},\\
q_t+2qv_x+q_xv+2pw_x+p_xw=0, & q=w-w_{xx}.\end{array}\label{zeq3.7} \eeq
When $\var=-1$, the system \eqref{zeq3.7} is the complex-CH equation (e.g.,\cite{Qu2013}).\\

\noindent{\bf (iii).}\, When $\alpha=0$ and $\beta=\zeta=\be$, the system
\eqref{Z3.15} reduces to the $\mathcal{Z}_2^{\var}$-HS equation
\beq  m_t+2m\circ u_x+m_x\circ u=0,\quad m=-u_{xx}, \quad u=ve_1+we_2,\nn  \eeq
equivalently in componentwise forms,
\beq\begin{array}{ll}  p_t+2pv_x+p_xv+\var(2qw_x+q_xw)=0, &   p=-v_{xx},\\
q_t+2qv_x+q_xv+2pw_x+p_xw=0, & q=-w_{xx}.\end{array}\label{zeq3.8} \eeq
\end{ex}

\subsection{Hamiltonian structures of the Euler equation \eqref{eq3.5}}
Let us take two arbitrary smooth functionals
$$\tilde{F}_i:\mathfrak{vir_\mathfrak{F}}^* \to \mathbb{R},\quad
\tilde{F}_i(\hat{m})=\int_{\mathbb{S}^1} \tr F_i(m) dx=\int_{\mathbb{S}^1} f_i(m_1,\cdots,m_l) dx,\quad i=1,2,$$
where $m=\dsum_{k=1}^lm_ke_k$.
 The variational derivative $\dfrac{\delta \tilde{F}_i }{\delta \hat{m}}$ is defined as
\beq \dfrac{\delta \tilde{F}_i }{\delta \hat{m}}=(\dfrac{\delta F_i }{\delta m}\p,\, 0)\in
\mathfrak{vir_\mathfrak{F}},\label{VD1}\eeq
where
$\dfrac{\delta F_i }{\delta m}$ is implicitly determined by
\eqa \tilde{F}_i(m+\delta m)-\tilde{F}_i(m)
&=&\int_{\mathbb{S}^1} \tr \left(\dfrac{\delta F_i }{\delta m}\circ \delta m+o(\delta m)\right)dx\nn\\
&=& \int_{\mathbb{S}^1}  \left(\dsum_{k=1}^l\dfrac{\delta f_i }{\delta m_k}\delta m_k+o(\delta m)\right)dx
\label{VD2} \eeqa
and $\dfrac{\delta f_i }{\delta m_k}$ is the usual variational derivative.
 This formula \eqref{VD2}
is very crucial to construct the bihamiltonian representation of the Euler equation.
On $\mathfrak{vir_\mathfrak{F}}^*$, there is a canonical Lie-Poisson bracket
\eqa
\mathcal{P}_2:=\{\tilde{F}_1,\tilde{F}_2\}_2(\hat{m})= \la m, [\dfrac{\delta \tilde{F}_1 }{\delta \hat{m}},
 \dfrac{\delta \tilde{F}_2 }{\delta \hat{m}}] \ra^*
 =\tr \int_{\mathbb{S}^1} \dfrac{\delta  {F}_1 }{\delta {m}} \circ\mathcal{J}_2
 \circ \dfrac{\delta  {F}_2 }{\delta {m}}\,dx \label{BH2}\eeqa
where $\mathcal{J}_2=-(m\p+\p m+\zeta\p^3)$ and $\hat{m}=(m dx^2, \zeta) \in \mathfrak{vir_F}^*$.
Taking a fixed point $\hat{m}_0=(\dfrac{\alpha}{2} dx^2, -\beta)$, we get another compatible Poisson bracket
denoted by
\beq \mathcal{P}_1=\{\tilde{F}_1,\tilde{F}_2\}_1(\hat{m}):
=\tr \int_{\mathbb{S}^1} \dfrac{\delta  {F}_1 }{\delta {m}}\circ \mathcal{J}_1
 \circ\dfrac{\delta  {F}_2 }{\delta {m}}\,dx, \quad \mbox{i.e.,}\quad
\mathcal{P}_1=\mathcal{P}_2|_{\hat{m}=\hat{m}_0}, \label{BH1} \eeq
where   $\mathcal{J}_1:=\mathcal{J}_2|_{\hat{m}=\hat{m}_0}=\beta \p^3-\alpha\p=-\p \Lambda$.

\begin{thm}\label{thm3.3} The $\mathfrak{F}$-valued Euler equation \eqref{eq3.5} with $\zeta\in\mathfrak{F}$
is local bihamiltonian with the freezing point
$\hat{m}_0=(\dfrac{\alpha}{2} dx^2, -\beta)\in \mathfrak{vir_{\mathfrak{F}}}^*$.\end{thm}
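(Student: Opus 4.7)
The strategy splits into three parts: (A) verifying the Poisson property of $\mathcal{P}_2$ and $\mathcal{P}_1$ together with their compatibility, (B) exhibiting local Hamiltonians $H_1,H_2$ which reproduce \eqref{eq3.5} under the two brackets, and (C) confirming that everything is local.

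For (A), $\mathcal{P}_2$ is nothing but the canonical Lie--Poisson bracket on $\mathfrak{vir_\mathfrak{F}}^*$: the bivector $\mathcal{J}_2=-(m\p+\p m+\zeta\p^3)$ is precisely what the coadjoint action of $\mathfrak{vir_\mathfrak{F}}$ computed in Section~2.2 dictates, and its Jacobi identity reduces to that of the Lie bracket of $\mathfrak{vir_\mathfrak{F}}$ (Definition~\ref{de2.4} and Lemma~2.3) combined with the symmetry $\tr(a\circ b)=\tr(b\circ a)$ of the trace map. Because $\mathcal{P}_2$ depends \emph{linearly} on $\hat m$, the pencil $\mathcal{P}_2(\hat m)+\lambda\,\mathcal{P}_2(\hat m_0)=\mathcal{P}_2(\hat m+\lambda\hat m_0)$ is again Lie--Poisson for every $\lambda\in\mathbb{R}$; hence, by the freezing-point trick, the constant bracket $\mathcal{P}_1:=\mathcal{P}_2|_{\hat m=\hat m_0}$ is Poisson and forms a compatible pencil with $\mathcal{P}_2$. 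Substituting $\hat m_0=(\tfrac{\alpha}{2}dx^2,-\beta)$ into $\mathcal{J}_2$ yields $\mathcal{J}_1=\beta\p^3-\alpha\p=-\p\Lambda$, matching \eqref{BH1}.

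For (B), set $u=\Lambda^{-1}m$ and take
\[ H_2=\tfrac{1}{2}\tr\int_{\mathbb{S}^1}u\circ m\,dx,\qquad
   H_1=\tr\int_{\mathbb{S}^1}\Bigl(\tfrac{\alpha}{2}\,u\circ u\circ u+\tfrac{\beta}{2}\,u\circ u_x\circ u_x-\tfrac{\zeta}{2}\,u_x\circ u_x\Bigr)dx. \]
Self-adjointness of $\Lambda$ together with \eqref{VD2} gives $\delta H_2/\delta m=u$, whence $\mathcal{J}_2(\delta H_2/\delta m)=-(2m\circ u_x+m_x\circ u+\zeta\circ u_{xxx})=m_t$, which is exactly \eqref{eq3.5}. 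A parallel integration-by-parts computation (using the Leibniz rule for $\p$ with respect to $\circ$) produces $\delta H_1/\delta u=\tfrac{3\alpha}{2}\,u\circ u-\tfrac{\beta}{2}\,u_x\circ u_x-\beta\,u\circ u_{xx}+\zeta\,u_{xx}$; since $\delta H_1/\delta m=\Lambda^{-1}(\delta H_1/\delta u)$ by the chain rule and $\mathcal{J}_1=-\p\Lambda$, one gets $\mathcal{J}_1(\delta H_1/\delta m)=-\p(\delta H_1/\delta u)$, and a one-line regrouping confirms that this equals $m_t$ as well. Locality (C) is then transparent: $\mathcal{J}_1,\mathcal{J}_2$ are differential operators with polynomial coefficients, and both densities are polynomial in $u,u_x,u_{xx}$.

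\textbf{Main obstacle.} The Poisson and compatibility steps are essentially automatic from the Lie--Poisson / freezing-point framework; the substantive point is producing the explicit local density of $H_1$. Concretely one must rewrite $-m_t=2m\circ u_x+m_x\circ u+\zeta\circ u_{xxx}$ as $\p$ of a polynomial in $u,u_x,u_{xx}$ via the identity $2m\circ u_x+m_x\circ u=\p(m\circ u)+m\circ u_x$, and then ``undo'' $\Lambda$ at the level of densities so that $\delta H_1/\delta m$ stays local. Verifying that the single candidate above works uniformly in $(\alpha,\beta,\zeta)$, and so specialises correctly to the bihamiltonian pairs for \eqref{F1}--\eqref{F3}, is the only nontrivial calculation.
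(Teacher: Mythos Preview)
Your proof is correct and follows essentially the same route as the paper: the paper also invokes the Lie--Poisson/freezing-point construction (stated just before the theorem) and then exhibits the same two Hamiltonians and variational derivatives you write down, verifying $m_t=\mathcal{J}_2(\delta/\delta m)\{\tfrac{1}{2}\tr\int m\circ u\}=\mathcal{J}_1(\delta/\delta m)\{\text{cubic density}\}$. The only cosmetic discrepancy is that your labels $H_1,H_2$ are swapped relative to the paper's (the paper calls the quadratic functional $H_1$ and the cubic one $H_2$), and your cubic density differs from the paper's by an integration by parts; the content is identical.
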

\begin{proof} Setting
 $$H_1=\frac{1}{2}\tr \int_{\mathbb{S}^1} m\circ u dx,\quad H_2=\frac{1}{2} \tr\int_{\mathbb{S}^1}\left({\zeta}\circ u\circ u_{xx}+\alpha\circ u^3
 -\frac{1}{2}{\beta}\circ u^2\circ u_{xx}\right) dx.$$
 With the formula \eqref{VD1}, we get
 \beq \frac{\delta H_1}{\delta u}=\Lambda(u), \quad
 \frac{\delta H_2}{\delta u}=\zeta\circ u_{xx}+\frac{3}{2}\alpha\circ u^2
 -\frac{1}{2}{\beta}\circ u_x^2-\beta\circ u\circ u_{xx}.\nn \eeq
By using $m=\Lambda(u)$, then
\beq \frac{\delta H_1}{\delta m}=\Lambda^{-1}\circ\frac{\delta H_1}{\delta u}=u,\quad
\frac{\delta H_2}{\delta m}=\Lambda^{-1}\circ \frac{\delta H_2}{\delta u}.\nn\eeq
So the system \eqref{eq3.5} could be written as
\beq m_t=\mathcal{J}_1\circ \frac{\delta H_2}{\delta m}=\mathcal{J}_2\circ\frac{\delta H_1}{\delta m}.\label{Add1}\eeq
Furthermore using the formula \eqref{VD2},
in componentwise forms the Euler equation \eqref{eq3.5} has the following bihamlitonian representation
\beq m_{k\,t}=\{m_k, H_2\}_1= \{m_k, H_1\}_2,\quad k=1,\cdots l,\eeq
where two compatible Poisson brackets $\{~,~\}_i,\, i=1,2 $ are defined in \eqref{BH2} and \eqref{BH1} respectively.
 \end{proof}

\begin{rem} Let us remark that when choose $\mathfrak{F}$ as the Frobenius algebra
 $(\mathcal{Z}_l, \mathrm{tr}_l)$ (\cite{CO2006,L2007,Zuo2013}), the Frobenius-Virasoro algebra
 $\mathfrak{vir_F}$ coincides with the polynomial Virasoro algebra introduced  by
 P.Casati and G.Ortenzi in \cite{CO2006-2}. They also computed Euler equations on $\mathfrak{vir_F}^*$ and proved
 that they admitted a local bihamiltonian structure by using the trace-type map $\mathrm{tr}_l$.
Actually in \cite{SZ2014}, it has been shown that there are at least $l$ ``basic" different ways to
regard the algebra $\mathcal{Z}_l$ as the Frobenius algebra $(\mathcal{Z}_l, \omega_{k})$
for $k=0,\cdots,l-1$. We want to mention that the trace map $\mathrm{tr}_l$ is a linear combination of ``basic"
trace maps given by $\mathrm{tr}_l=\dsum_{k=0}^{l-1}\omega_{k}-(l-1)\,\omega_{l-1}$.
Using Theorem \ref{thm3.3}, we thus obtain
\begin{cor} The $\mathcal{Z}_l$-valued
Euler equation \eqref{eq3.5} has at least $l$ ``basic" local bihamiltonian structures.
 \end{cor}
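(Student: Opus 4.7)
The plan is simply to invoke Theorem \ref{thm3.3} once for each of the $l$ distinct Frobenius algebra structures carried by the underlying commutative associative $\mathbb{R}$-algebra $\mathcal{Z}_l$. As recalled in the preceding remark and established in \cite{SZ2014}, the algebra $(\mathcal{Z}_l, \be, \circ)$ admits $l$ ``basic'' trace maps $\omega_0, \omega_1, \ldots, \omega_{l-1}$, each promoting $\mathcal{Z}_l$ to a Frobenius algebra $(\mathcal{Z}_l, \omega_k, \be, \circ)$.

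The key observation I would highlight first is that the Euler equation \eqref{eq3.5} is defined purely in terms of the product $\circ$ and the unit $\be$, so the very same PDE arises from each of the Frobenius structures $(\mathcal{Z}_l, \omega_k, \be, \circ)$. By contrast, the bihamiltonian data supplied by Theorem \ref{thm3.3}---the pairing $\langle \cdot, \cdot \rangle^*$, the Hamiltonian operators $\mathcal{J}_1, \mathcal{J}_2$, and the Hamiltonians $H_1, H_2$---all depend on the trace through the formula \eqref{VD2} that translates the ``Frobenius'' variational derivative $\delta F/\delta m$ into the scalar variational derivatives $\delta f/\delta m_j$. Thus each choice $\tr = \omega_k$ plugged into Theorem \ref{thm3.3} yields its own local bihamiltonian representation of \eqref{eq3.5}, giving $l$ such representations in total.

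The only substantive point to verify is that these $l$ structures are pairwise distinct rather than coincidentally equal. This would follow from the linear independence of $\omega_0, \ldots, \omega_{l-1}$ as $\mathbb{R}$-linear functionals on $\mathcal{Z}_l$: distinct trace maps induce distinct Gram matrices $\bigl(\omega_k(e_i\circ e_j)\bigr)_{i,j}$, which is precisely the matrix converting $\delta F/\delta u$ into $\delta F/\delta m$ componentwise in \eqref{VD2}, so the Hamiltonian densities and Poisson brackets $\mathcal{P}_i^{(k)}$ cannot agree for different $k$. I do not anticipate any real obstacle beyond this bookkeeping check, since the entire analytic content---compatibility, locality, and the bihamiltonian identity $m_t = \mathcal{J}_1 \circ \delta H_2/\delta m = \mathcal{J}_2 \circ \delta H_1/\delta m$---has already been packaged into Theorem \ref{thm3.3} and applies verbatim to each $(\mathcal{Z}_l, \omega_k, \be, \circ)$.
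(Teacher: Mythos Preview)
Your proposal is correct and matches the paper's approach exactly: the paper simply says ``Using Theorem \ref{thm3.3}, we thus obtain'' the corollary, i.e., it applies Theorem \ref{thm3.3} once for each of the $l$ basic trace maps $\omega_0,\ldots,\omega_{l-1}$ on $\mathcal{Z}_l$. One small imprecision: the $\mathfrak{F}$-valued operators $\mathcal{J}_1,\mathcal{J}_2$ themselves do not depend on the trace; it is the Poisson brackets $\mathcal{P}_1,\mathcal{P}_2$ built from them via \eqref{BH2}--\eqref{BH1}, together with the Hamiltonians $H_1,H_2$ and the componentwise variational derivative \eqref{VD2}, that change with $\omega_k$---but this does not affect your argument.
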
\end{rem}

\subsection{Examples}
According to Example \ref{ex2.2},  $(\mathcal{Z}_2^{\var}, \trm, \be,{\circ}~)$ for $k=1,2$
are the Frobenius algebras.  We thus have

\begin{cor}The $\mathcal{Z}_2^{\var}$-valued Euler equation \eqref{Z3.15}
has at least two kinds of ``basic" local bihamiltonian structures.
\end{cor}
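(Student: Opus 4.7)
The corollary is a direct transportation of Theorem \ref{thm3.3} to the two Frobenius algebra structures on $\mathcal{Z}_2^{\var}$ exhibited in Example \ref{ex2.2}. The plan is simply to apply Theorem \ref{thm3.3} once for each of the two ``basic'' trace maps $\trm$, $k=1,2$, and then verify that the two resulting bihamiltonian pairs are genuinely distinct.

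First I would recall from Example \ref{ex2.2} that $(\mathcal{Z}_2^{\var},\trm,\be,\circ)$ is a Frobenius algebra for each $k=1,2$, so Definition \ref{de2.4} and the whole construction of $\mathfrak{vir_{\mathcal{Z}_2^{\var}}}$ and its regular dual go through in both cases, the only ingredient that changes being the trace map entering the pairing $\la\cdot,\cdot\ra^*$. Next, for each fixed $k$, I would invoke Theorem \ref{thm3.3} verbatim: the Hamiltonians
\[
H_1^{(k)}=\tfrac{1}{2}\trm\!\int_{\mathbb{S}^1}\! m\circ u\,dx,\qquad
H_2^{(k)}=\tfrac{1}{2}\trm\!\int_{\mathbb{S}^1}\!\!\left(\zeta\circ u\circ u_{xx}+\alpha\circ u^3-\tfrac{1}{2}\beta\circ u^2\circ u_{xx}\right)dx
\]
together with the operators $\mathcal{J}_2=-(m\p+\p m+\zeta\p^3)$ and $\mathcal{J}_1=\beta\p^3-\alpha\p$ (the latter being $\mathcal{J}_2$ frozen at $\hat{m}_0=(\tfrac{\alpha}{2}dx^2,-\beta)$) give the bihamiltonian representation $m_t=\mathcal{J}_1\circ\delta H_2^{(k)}/\delta m=\mathcal{J}_2\circ\delta H_1^{(k)}/\delta m$ of \eqref{Z3.15}. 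Compatibility of $\mathcal{J}_1$ and $\mathcal{J}_2$ is inherited from Theorem \ref{thm3.3} and is independent of the choice of trace.

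The only nontrivial point is that the two bihamiltonian structures obtained this way really are different. To see this I would rewrite everything in the components $m=pe_1+qe_2$, $u=ve_1+we_2$ using the multiplication $e_2\circ e_2=\var e_1$. Then the pairing $\tr(a\circ b)$ differs for $k=1$ and $k=2$: namely $\mathrm{tr}_{2,1}^\var(pe_1+qe_2)=p+q\delta_{\var,0}$ picks out the $e_1$-component (plus a correction at $\var=0$), while $\mathrm{tr}_{2,2}^\var$ picks out the $e_2$-component. Consequently the densities $\mathrm{tr}_{2,k}^\var F_i$ produce different scalar Hamiltonians $h_i^{(k)}(p,q,v,w)$, and the Poisson brackets $\{m_\alpha(x),m_\beta(y)\}_{j}^{(k)}$ acquire different coefficient structures in $(p,q)$; an easy check shows, for instance, that the second Hamiltonian density at $\zeta=\be$, $\alpha=\be$, $\beta=0$ reduces to $\tfrac{1}{2}(vp+\var wq)+\tfrac{1}{2}(v^3+3\var vw^2)$ for $k=1$ and to $\tfrac{1}{2}(vq+wp)+\tfrac{3}{2}(v^2w+\var w^3/3)$ for $k=2$, which are manifestly independent functionals on the component space.

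The potentially delicate step is this last one: distinguishing the two structures rather than merely producing them. I anticipate no analytic difficulty there, only a direct componentwise comparison of $h_i^{(k)}$ and of the associated operators $\mathcal{J}_j^{(k)}$, which is a finite calculation at the level of $2\times 2$ matrix differential operators on $(p,q)$. Once this is recorded, the conclusion of the corollary is immediate: Theorem \ref{thm3.3} delivers one local bihamiltonian structure for each of the two trace maps in \eqref{ZZ2.3}, and the above check shows they are genuinely distinct, hence ``at least two'' basic structures exist.
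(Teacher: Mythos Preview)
Your proposal is correct and follows exactly the paper's own (implicit) argument: the corollary is stated as an immediate consequence of Theorem~\ref{thm3.3} applied to the two Frobenius structures $(\mathcal{Z}_2^{\var},\trm,\be,\circ)$, $k=1,2$, from Example~\ref{ex2.2}. Your additional componentwise check that the two resulting structures are genuinely distinct is a reasonable elaboration---the paper itself establishes this only implicitly through the explicit formulas in the examples that follow.
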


Naturally, we know that the $\mathcal{Z}_2^{\var}$-CH equation \eqref{F2} and the $\mathcal{Z}_2^{\var}$-HS equation
\eqref{F3} have at least two kinds of  ``basic" local bihamiltonian structures. For the $\mathcal{Z}_2^{\var}$-KdV
equation \eqref{F2}, two kinds of ``basic" local bihamiltonian structures have been obtained
in \cite{SZ2014,Zuo2013} by other methods.
Based on our construction, more precisely we have

\begin{ex}We consider the case: {\bf $[ \var \ne 0]$}.

{\bf (i). } The $\mathcal{Z}_2^{\var}$-KdV equation \eqref{zeq3.6}
\beq v_t+3vv_x+v_{xxx}+3 \var w w_x=0,\quad w_t+3(vw)_x+w_{xxx}=0\nn \eeq
could be rewritten as
\beq \left(
\begin{array}{c}
v\\
w\end{array}\right)_t=-\left(
\begin{array}{cc}
0& \p\\
\p& 0
\end{array}\right)\left(
\begin{array}{c}
\frac{\delta H_{2}}{\delta v}\\
\frac{\delta H_{2}}{\delta w}\end{array}\right)=-\left(
\begin{array}{cc}
\var J_1& J_0\\
J_0 &  J_1
\end{array}\right)\left(
\begin{array}{c}
\frac{\delta H_{1}}{\delta v}\\
\frac{\delta H_{1}}{\delta w}\end{array}\right) \nn \eeq
with Hamiltonians
 $$H_1=\int_{\mathbb{S}^1}vw dx,\quad H_2=\frac{1}{2} \int_{\mathbb{S}^1}(3v^2w+\var w^3+2vw_{xx}) dx;$$
and
\beq \left(
\begin{array}{c}
v\\
w\end{array}\right)_t=-\left(
\begin{array}{cc}
\p & 0\\
0 & \dfrac{1}{\var}\p
\end{array}\right)\left(
\begin{array}{c}
\frac{\delta \widetilde{H}_{2}}{\delta v}\\
\frac{\delta \widetilde{H}_{2}}{\delta w}\end{array}\right)=-\left(
\begin{array}{cc}
J_0& J_1\\
J_1 & \frac{1}{\var} J_0
\end{array}\right)\left(
\begin{array}{c}
\frac{\delta \widetilde{H}_{1}}{\delta v}\\
\frac{\delta \widetilde{H}_{1}}{\delta w}\end{array}\right) \nn \eeq
with Hamiltonians
 $$\widetilde{H}_1=\frac{1}{2} \int_{\mathbb{S}^1}(v^2+\var w^2) dx,\quad
 \widetilde{H}_2=\frac{1}{2} \int_{\mathbb{S}^1}(v^3+vv_{xx}+3 \var v w^2+\var w w_{xx}) dx,$$
where $J_0=\p^3+v\p+\p v$ and $J_1=w\p+\p w$.\\

{\bf (ii)}. The $\mathcal{Z}_2^{\var}$-CH equation \eqref{zeq3.7}
\beq\begin{array}{ll}  p_t+2pv_x+p_xv+\var(2qw_x+q_xw)=0, &   p=v-v_{xx},\\
q_t+2qv_x+q_xv+2pw_x+p_xw=0, & q=w-w_{xx}, \nn \end{array}\eeq
could be rewritten as
\beq \left(
\begin{array}{c}
p\\
q\end{array}\right)_t=\left(
\begin{array}{cc}
0& \p^3- \p\\
\p^3-\p& 0
\end{array}\right)\left(
\begin{array}{c}
\frac{\delta H_{2}}{\delta p}\\
\frac{\delta H_{2}}{\delta q}\end{array}\right)=-\left(
\begin{array}{cc}
\var K_1& K_0\\
K_0 &  K_1
\end{array}\right)\left(
\begin{array}{c}
\frac{\delta H_{1}}{\delta p}\\
\frac{\delta H_{1}}{\delta q}\end{array}\right) \nn \eeq
with Hamiltonians
 $$H_1=\frac{1}{2}\int_{\mathbb{S}^1}(qv+pw) dx,\quad H_2=\frac{1}{4} \int_{\mathbb{S}^1}
 \left(2vw_{xx}+2wv_{xx}-2wvv_{xx}-v^2w_{xx}-\var w^2w_{xx}
 \right) dx,$$
and
\beq \left(
\begin{array}{c}
p\\
q\end{array}\right)_t=\left(
\begin{array}{cc}
\p^3-\p & 0\\
0 & \dfrac{1}{\var}(\p^3-\p)
\end{array}\right)\left(
\begin{array}{c}
\frac{\delta \widetilde{H}_{2}}{\delta p}\\
\frac{\delta \widetilde{H}_{2}}{\delta q}\end{array}\right)=-\left(
\begin{array}{cc}
K_0& K_1\\
K_1 & \dfrac{1}{\var} K_0
\end{array}\right)\left(
\begin{array}{c}
\frac{\delta \widetilde{H}_{1}}{\delta p}\\
\frac{\delta \widetilde{H}_{1}}{\delta q}\end{array}\right) \nn \eeq
with Hamiltonians
 $$\widetilde{H}_1=\frac{1}{2} \int_{\mathbb{S}^1}(pv+\var qw) dx,\quad
 \widetilde{H}_2=\frac{1}{4} \int_{\mathbb{S}^1}\left(2vv_{xx}-v^2v_{xx}+\var(ww_{xx}-w^2v_{xx}-2vww_{xx})
 \right) dx,$$
where $K_0=p\p+\p p$ and $K_1=q\p+\p q$.\\

{\bf (iii). } The $\mathcal{Z}_2^{\var}$-HS equation \eqref{zeq3.8}
\beq\begin{array}{ll}  p_t+2pv_x+p_xv+\var(2qw_x+q_xw)=0, &   p=-v_{xx},\\
q_t+2qv_x+q_xv+2pw_x+p_xw=0, & q=-w_{xx}, \end{array}\nn\eeq
could be rewritten as
\beq \left(
\begin{array}{c}
p\\
q\end{array}\right)_t=\left(
\begin{array}{cc}
0& \p^3\\
\p^3& 0
\end{array}\right)\left(
\begin{array}{c}
\frac{\delta H_{2}}{\delta p}\\
\frac{\delta H_{2}}{\delta q}\end{array}\right)=-\left(
\begin{array}{cc}
\var K_1& K_0\\
K_0 &  K_1
\end{array}\right)\left(
\begin{array}{c}
\frac{\delta H_{1}}{\delta p}\\
\frac{\delta H_{1}}{\delta q}\end{array}\right) \nn \eeq
with Hamiltonians
 $$H_1=\frac{1}{2}\int_{\mathbb{S}^1}(qv+pw) dx,\quad H_2=\frac{1}{4} \int_{\mathbb{S}^1}
 \left(2wvp+v^2q+\var w^2q \right) dx,$$
and
\beq \left(
\begin{array}{c}
p\\
q\end{array}\right)_t=\left(
\begin{array}{cc}
\p^3& 0\\
0 & \dfrac{1}{\var}\p^3
\end{array}\right)\left(
\begin{array}{c}
\frac{\delta \widetilde{H}_{2}}{\delta p}\\
\frac{\delta \widetilde{H}_{2}}{\delta q}\end{array}\right)=-\left(
\begin{array}{cc}
K_0& K_1\\
K_1 & \dfrac{1}{\var} K_0
\end{array}\right)\left(
\begin{array}{c}
\frac{\delta \widetilde{H}_{1}}{\delta p}\\
\frac{\delta \widetilde{H}_{1}}{\delta q}\end{array}\right) \nn \eeq
with Hamiltonians
 $$\widetilde{H}_1=\frac{1}{2} \int_{\mathbb{S}^1}(pv+\var qw) dx,\quad
 \widetilde{H}_2=\frac{1}{4} \int_{\mathbb{S}^1}\left(pv^2+\var p w^2+2\var vwq
 \right) dx,$$
where $K_0=p\p+\p p$ and $K_1=q\p+\p q$.
 \end{ex}

\begin{ex}We consider another case: {\bf $[ \var=0]$}.

{\bf (i). } The $\mathcal{Z}_2^{0}$-KdV equation \eqref{zeq3.6}
\beq v_t+3vv_x+v_{xxx}=0,\quad w_t+3(vw)_x+w_{xxx}=0\nn \eeq
could be rewritten as
\beq \left(
\begin{array}{c}
v\\
w\end{array}\right)_t=-\left(
\begin{array}{cc}
0& \p\\
\p& 0
\end{array}\right)\left(
\begin{array}{c}
\frac{\delta H_{2}}{\delta v}\\
\frac{\delta H_{2}}{\delta w}\end{array}\right)=-\left(
\begin{array}{cc}
0 & J_0\\
J_0 &  J_1
\end{array}\right)\left(
\begin{array}{c}
\frac{\delta H_{1}}{\delta v}\\
\frac{\delta H_{1}}{\delta w}\end{array}\right) \nn \eeq
with Hamiltonians
 $$H_1=\int_{\mathbb{S}^1}vw dx,\quad H_2=\frac{1}{2} \int_{\mathbb{S}^1}(3v^2w+2vw_{xx}) dx;$$
and
\beq \left(
\begin{array}{c}
v\\
w\end{array}\right)_t=-\left(
\begin{array}{cc}
0 & \p\\
\p & -\p
\end{array}\right)\left(
\begin{array}{c}
\frac{\delta \widetilde{H}_{2}}{\delta v}\\
\frac{\delta \widetilde{H}_{2}}{\delta w}\end{array}\right)=-\left(
\begin{array}{cc}
0& J_0\\
J_0 & J_1-J_0
\end{array}\right)\left(
\begin{array}{c}
\frac{\delta \widetilde{H}_{1}}{\delta v}\\
\frac{\delta \widetilde{H}_{1}}{\delta w}\end{array}\right) \nn \eeq
with Hamiltonians
 $$\widetilde{H}_1=\frac{1}{2} \int_{\mathbb{S}^1}(v^2+2vw) dx,\quad
 \widetilde{H}_2=\frac{1}{2} \int_{\mathbb{S}^1}(v^3+vv_{xx}+3v^2 w+ 2v w_{xx}) dx,$$
where $J_0=\p^3+v\p+\p v$ and $J_1=w\p+\p w$.\\

{\bf (ii)} The $\mathcal{Z}_2^0$-CH equation \eqref{zeq3.7}
\beq\begin{array}{ll}  p_t+2pv_x+p_xv=0, &   p=v-v_{xx},\\
q_t+2qv_x+q_xv+2pw_x+p_xw=0, & q=w-w_{xx}  \end{array}\nn \eeq
could be rewritten as
\beq \left(
\begin{array}{c}
p\\
q\end{array}\right)_t=\left(
\begin{array}{cc}
0& \p^3-\p\\
\p^3-\p& 0
\end{array}\right)\left(
\begin{array}{c}
\frac{\delta H_{2}}{\delta p}\\
\frac{\delta H_{2}}{\delta q}\end{array}\right)=-\left(
\begin{array}{cc}
0 & K_0\\
K_0 &  K_1
\end{array}\right)\left(
\begin{array}{c}
\frac{\delta H_{1}}{\delta p}\\
\frac{\delta H_{1}}{\delta q}\end{array}\right) \nn \eeq
with Hamiltonians
 $$H_1=\frac{1}{2}\int_{\mathbb{S}^1}(qv+pw) dx,\quad H_2=\frac{1}{4} \int_{\mathbb{S}^1}
 \left(2vw_{xx}+2wv_{xx}-2wvv_{xx}-v^2w_{xx}
 \right) dx,$$
and
\beq \left(
\begin{array}{c}
p\\
q\end{array}\right)_t=\left(
\begin{array}{cc}
0& \p^3-\p\\
\p^3-\p& \p-\p^3
\end{array}\right)\left(
\begin{array}{c}
\frac{\delta \widetilde{H}_{2}}{\delta p}\\
\frac{\delta \widetilde{H}_{2}}{\delta q}\end{array}\right)=-\left(
\begin{array}{cc}
0& K_0\\
K_0 & K_1-K_0
\end{array}\right)\left(
\begin{array}{c}
\frac{\delta \widetilde{H}_{1}}{\delta p}\\
\frac{\delta \widetilde{H}_{1}}{\delta q}\end{array}\right) \nn \eeq
with Hamiltonians $$\widetilde{H}_1=\dfrac{1}{2}\int_{\mathbb{S}^1}(pv+qv+pw) dx$$
and $$\widetilde{H}_2=\frac{1}{4} \int_{\mathbb{S}^1}
 \left(2vw_{xx}+2wv_{xx}-2wvv_{xx}-v^2w_{xx}+2vv_{xx}-v^2v_{xx}
 \right) dx,$$
where $K_0=p\p+\p p$ and $K_1=q\p+\p q$.\\

{\bf (iii). }The $\mathcal{Z}_2^0$-HS equation \eqref{zeq3.8}
\beq\begin{array}{ll}  p_t+2pv_x+p_xv=0, &   p=-v_{xx},\\
q_t+2qv_x+q_xv+2pw_x+p_xw=0, & q=-w_{xx}\end{array}\nn \eeq
could be rewritten as
\beq \left(
\begin{array}{c}
p\\
q\end{array}\right)_t=\left(
\begin{array}{cc}
0& \p^3\\
\p^3& 0
\end{array}\right)\left(
\begin{array}{c}
\frac{\delta H_{2}}{\delta p}\\
\frac{\delta H_{2}}{\delta q}\end{array}\right)=-\left(
\begin{array}{cc}
0 & K_0\\
K_0 &  K_1
\end{array}\right)\left(
\begin{array}{c}
\frac{\delta H_{1}}{\delta p}\\
\frac{\delta H_{1}}{\delta q}\end{array}\right) \nn \eeq
with Hamiltonians
 $$H_1=\frac{1}{2}\int_{\mathbb{S}^1}(qv+pw) dx,\quad H_2=\frac{1}{4} \int_{\mathbb{S}^1}
 \left(2wvp+v^2q \right) dx,$$
and
\beq \left(
\begin{array}{c}
p\\
q\end{array}\right)_t=\left(
\begin{array}{cc}
0& \p^3-\p\\
\p^3-\p& \p-\p^3
\end{array}\right)\left(
\begin{array}{c}
\frac{\delta \widetilde{H}_{2}}{\delta p}\\
\frac{\delta \widetilde{H}_{2}}{\delta q}\end{array}\right)=-\left(
\begin{array}{cc}
0& K_0\\
K_0 & K_1-K_0
\end{array}\right)\left(
\begin{array}{c}
\frac{\delta \widetilde{H}_{1}}{\delta p}\\
\frac{\delta \widetilde{H}_{1}}{\delta q}\end{array}\right) \nn \eeq
with Hamiltonians
 $$\widetilde{H}_1=\frac{1}{2}\int_{\mathbb{S}^1}(pv+qv+pw) dx,\quad \widetilde{H}_2=\frac{1}{4} \int_{\mathbb{S}^1}
 \left(pv^2+2wvp+v^2q \right) dx,$$
where $K_0=p\p+\p p$ and $K_1=q\p+\p q$.
 \end{ex}

 \subsection{Euler equations on $\mathfrak{vir_F}^*$ for general product $P_{\alpha_0,\cdots,\alpha_n}$}
 To end up this section, on $\mathfrak{vir_F}$ we introduce a general product $P_{\alpha_0,\cdots,\alpha_n}$
given by
\beq \la\hat{u},\hat{v}\ra= \tr\int_{\mathrm{S}^1}\Big(\alpha_0{\circ}u{\circ}v+\dsum_{k=1}^n\alpha_k{\circ}u^{(k)}{\circ}v^{(k)}\Big) dx+\tr(a{\circ}~b),
\quad u^{(k)}=\dfrac{d^k u}{dx^k}.
\eeq
By analogy with the above discussions, we have

\begin{prop}\label{prop2.12} The Euler equation \eqref{eq0.1} on $\mathfrak{vir_F}^*$  for
$P_{\alpha_0,\cdots,\alpha_n}$ reads
\beq m_t+2m{\circ}u_x+m_x{\circ}u+\zeta{\circ}u_{xxx}=0,\quad \zeta_t=0, \label{eq4.5} \eeq
where $m=\alpha_0 {\circ}u+\dsum_{k=1}^n(-1)^k\alpha_k {\circ}u^{(2k)}$. Moreover,
the system \eqref{eq4.5} with $\zeta\in\mathfrak{F}$ could be written as
\beq m_{k,t}=\{m_k,H_1\}_2,\quad H_1=\frac{1}{2}\tr \int_{\mathbb{S}^1} m\circ u dx\eeq
where $\{~,~\}_2$ is defined in \eqref{BH2}.
\end{prop}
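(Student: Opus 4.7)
The plan is to mimic the derivation of the two-parameter case $P_{\alpha,\beta}$, the only new ingredient being that the inertia operator now contains higher-order (but still even-order) derivatives. The argument splits into identifying the inertia operator $\mathcal{A}$, applying the already-computed coadjoint action to get \eqref{eq4.5}, and then reading off the first Hamiltonian structure exactly as in Theorem \ref{thm3.3}.

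First I would compute the inertia operator $\mathcal{A}:\mathfrak{vir_F}\to\mathfrak{vir_F}^{*}$ attached to $P_{\alpha_0,\dots,\alpha_n}$. For each $k\ge 1$, integrating by parts $k$ times on $\mathbb{S}^{1}$ (boundary terms vanish by periodicity) and using commutativity/associativity of $\circ$ gives
\[
\tr\int_{\mathbb{S}^{1}}\alpha_k\circ u^{(k)}\circ v^{(k)}\,dx=(-1)^{k}\,\tr\int_{\mathbb{S}^{1}}\alpha_k\circ u^{(2k)}\circ v\,dx,
\]
so that $\la\hat u,\hat v\ra=\la\mathcal{A}(\hat u),\hat v\ra^{*}$ with $\mathcal{A}(\hat u)=(\Lambda(u),a)$ and $\Lambda(u)=\alpha_0\circ u+\sum_{k=1}^{n}(-1)^{k}\alpha_k\circ u^{(2k)}$. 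Since the coadjoint action $ad^{*}_{\hat u}\hat m$ on $\mathfrak{vir_F}^{*}$ computed in Section 2.2 is independent of the inertia operator, substituting $u=\mathcal{A}^{-1}m$ into \eqref{eq0.1} and reading off the two components yields \eqref{eq4.5} together with the conservation law $\zeta_t=0$.

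Next I would verify the Hamiltonian identity $m_{k,t}=\{m_k,H_1\}_{2}$. Varying $H_1=\frac{1}{2}\tr\int m\circ u\,dx$ with $m=\Lambda(u)$ and repeating the same integrations by parts gives $\delta H_1=\tr\int \Lambda(u)\circ\delta u\,dx=\tr\int m\circ\delta u\,dx$. The operator $\Lambda$ is formally self-adjoint for the $\mathfrak{F}$-valued pairing (each summand is a constant $\mathfrak{F}$-multiplication composed with an even-order derivative), so $\Lambda^{-1}$ is too; writing $\delta u=\Lambda^{-1}(\delta m)$ therefore yields
\[
\delta H_1=\tr\int u\circ\delta m\,dx,\qquad\text{i.e.}\qquad\frac{\delta H_1}{\delta m}=u.
\]
Inserting this into $\mathcal{J}_2=-(m\p+\p m+\zeta\p^{3})$ reproduces $-(2m\circ u_x+m_x\circ u+\zeta\circ u_{xxx})$, which is precisely the right-hand side of \eqref{eq4.5}; unpacking the equality through \eqref{VD2} then delivers the componentwise statement $m_{k,t}=\{m_k,H_1\}_{2}$.

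The only mildly delicate step is the $g_{\mathfrak{F}}$-symmetry of $\Lambda$, which is what allows the variational derivative to be transferred from $\delta u$ to $\delta m$. This symmetry is forced by the combination of commutativity of $\circ$ with repeated integration by parts on $\mathbb{S}^{1}$, so no new machinery beyond that of Theorem \ref{thm3.3} is required: the calculation simply tracks derivatives of arbitrary even order, all of whose boundary contributions vanish by periodicity.
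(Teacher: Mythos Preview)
Your proposal is correct and follows precisely the approach the paper intends: the paper does not write out a proof for this proposition but simply prefaces it with ``By analogy with the above discussions,'' and your argument is exactly that analogy---identify $\Lambda=\alpha_0+\sum_{k\ge1}(-1)^k\alpha_k\partial^{2k}$ via integration by parts, plug into the already-computed coadjoint action, and then repeat the variational computation from Theorem~\ref{thm3.3} to obtain $\delta H_1/\delta m=u$. Your remark that self-adjointness of $\Lambda$ (hence of $\Lambda^{-1}$) is what justifies the passage from $\delta u$ to $\delta m$ is the only point that might warrant comment, and you handle it correctly.
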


Generally, when $n\geq 2$, the system \eqref{eq4.5} isn't a bihamiltonian system. But
if there are many different ways to realize the algebra $(\mathfrak{F}, {\bf 1_\mathfrak{F}},\circ)$
 as the Frobenius algebras, then it follows from Proposition \ref{prop2.12} that the system \eqref{eq4.5} has many different Hamiltonian structures. For instance,

\begin{cor}The $\mathcal{Z}_2^{\var}$-valued Euler equation \eqref{eq4.5}  with
$\zeta\in \mathcal{Z}_2^{\var}$
 admits at least  two ``basic" local Hamiltonian structures.
\end{cor}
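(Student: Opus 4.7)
The plan is to invoke Proposition \ref{prop2.12} separately for each of the two Frobenius algebra structures carried by the commutative associative algebra $(\mathcal{Z}_2^{\var},\be,\circ)$ recorded in Example \ref{ex2.2}. Recall that the underlying algebra $(\mathcal{Z}_2^{\var},\be,\circ)$ is fixed, so the Euler equation \eqref{eq4.5} itself (as a system of PDEs in the components $v,w$) does not depend on the choice of trace map; only its Hamiltonian formulation does. Thus the corollary reduces to exhibiting two inequivalent Hamiltonian packagings of the same evolution.

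First I would fix $\mathfrak{F}=(\mathcal{Z}_2^{\var},\mathrm{tr}_{2,1}^{\var},\be,\circ)$ and apply Proposition \ref{prop2.12} verbatim: the Euler equation \eqref{eq4.5} takes the form $m_{k,t}=\{m_k,H_1^{(1)}\}_2^{(1)}$, where the Hamiltonian and the Lie-Poisson bracket are constructed from the trace $\mathrm{tr}_{2,1}^{\var}$ via \eqref{VD1}--\eqref{BH2}; locality of this Hamiltonian structure is automatic since $\mathcal{J}_2=-(m\p+\p m+\zeta\p^3)$ is a local differential operator. Then I would repeat the construction with $\mathfrak{F}=(\mathcal{Z}_2^{\var},\mathrm{tr}_{2,2}^{\var},\be,\circ)$ to obtain a second local Hamiltonian representation $m_{k,t}=\{m_k,H_1^{(2)}\}_2^{(2)}$. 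Both formulations are valid because both $\mathrm{tr}_{2,k}^{\var}$ for $k=1,2$ were verified in Example \ref{ex2.2} to furnish genuine Frobenius algebra structures, so the nondegenerate pairing $\la\cdot,\cdot\ra^*$ used throughout Section 2 is available in each case.

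The only thing that still requires comment is that the two structures so obtained are truly distinct (otherwise the word ``two'' would be vacuous). I would verify this by observing from \eqref{ZZ2.3} that the linear forms $\mathrm{tr}_{2,1}^{\var}$ and $\mathrm{tr}_{2,2}^{\var}$ on $\mathcal{Z}_2^{\var}$ are linearly independent: they pick out different components of $A=a_1 e_1+a_2 e_2$. Consequently the induced Hamiltonian densities $\mathrm{tr}_{2,k}^{\var}(m\circ u)$ are linearly independent functions of the components of $m$ and $u$, and the componentwise Poisson brackets $\{m_i,m_j\}_2^{(k)}$ on the component functions $m_i=\tr(m\circ e_i^{*})$ differ because the dual bases $\{e_i^*\}$ selected by the two traces are different. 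Hence the two Hamiltonian structures are genuinely ``basic'' and distinct.

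The argument is essentially a direct application of Proposition \ref{prop2.12}, so no serious obstacle is anticipated; the main conceptual point (rather than a computational difficulty) is simply the recognition, emphasized in the introduction and in Remark following Theorem \ref{thm3.3}, that the number of trace maps on a given commutative associative algebra controls the number of Hamiltonian formulations available through this Lie-theoretic construction. In particular, unlike the bihamiltonian statement of Theorem \ref{thm3.3}, no compatibility or Jacobi-identity verification beyond what is already embedded in Proposition \ref{prop2.12} is required.
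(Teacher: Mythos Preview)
Your proposal is correct and follows exactly the route the paper intends: the corollary is stated without proof in the paper, as an immediate instance of the sentence preceding it, namely that multiple trace maps on $(\mathfrak{F},\be,\circ)$ yield, via Proposition~\ref{prop2.12}, as many Hamiltonian structures on the same Euler equation; for $\mathcal{Z}_2^{\var}$ the two basic trace maps $\mathrm{tr}_{2,k}^{\var}$, $k=1,2$, from Example~\ref{ex2.2} supply the two structures. Your added remark on distinctness is a reasonable elaboration but goes beyond what the paper bothers to justify.
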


\section{Conclusion}
In order to understand Eulerian nature of the $\mathfrak{F}$-valued KdV equation, we have introduced
the Frobenius-Virasoro algebra $\mathfrak{vir_\mathfrak{F}}$ and also described
 Euler equations on $\mathfrak{vir_\mathfrak{F}}^*$ under the product $P_{\alpha_0,\cdots,\alpha_n}$
 and proved that all resulted Euler equations for $P_{\alpha,\beta}$ are local bihamiltonian systems.
Here we only studied the Euler equation associated with $\mathfrak{vir_\mathfrak{F}}$.
In subsequent publications we hope to address those problems related to algebraic properties of  $\mathfrak{vir_\mathfrak{F}}$, such as  \\
\qquad {\it Q1. What is the second continuous cohomology group $\mathrm{H}^2(\mathfrak{X},\mathfrak{F})$?}\\
  {\it Q2. How about the representation theory of $\mathfrak{vir_\mathfrak{F}}$?}  \\
  {\it Q3. If exists, what is the corresponding Lie group $G_\mathfrak{F}$ of $\mathfrak{vir_\mathfrak{F}}$? For instance, $G_\mathbb{R}$ is  the Bott-Virasoro group.}

\medskip

\noindent{\bf Acknowledgements.}  The author is grateful to Professors Qing Chen, Yi Cheng
and Youjin Zhang for constant supports and Professor Ian A.B.Strachan, Dr.Ying Shi for
fruitful discussions. This work is partially supported by NCET-13-0550, NSFC (11271345,
11371138), SRF for ROCS, SEM and OATF,USTC.


\end{document}